\documentclass[sigplan]{acmart}

\usepackage{preamble}

\newcommand{\nc}[1]{{\color{cyan}\grumbler{Natacha}{#1}}}
\newcommand{\agc}[1]{{\color{blue}\grumbler{Allen}{#1}}}

\newcommand{\neil}[1]{{\color{red}\grumbler{Neil}{#1}}}

\title{It’s not a lie if you don’t get caught: \\  simplifying
reconfiguration in SMR through dirty logs}
\date{}
\author{Allen Clement}
\affiliation{
  \institution{Subzero Labs}
  \country{}
}
\author{Natacha Crooks}
\affiliation{
  \institution{Subzero Labs}
  \country{}
}
\affiliation{
  \institution{UC Berkeley}
  \country{}
}
\author{Neil Giridharan}
\affiliation{
  \institution{Subzero Labs}
    \country{}
}
\affiliation{
  \institution{UC Berkeley}
  \country{}
}

\author{Alex Shamis}
\affiliation{
  \institution{Subzero Labs}
  \country{}
}

\begin{document}

\begin{abstract}

Production state-machine replication (SMR) implementations are complex, multi-layered architectures comprising data dissemination, ordering, execution, and reconfiguration components. Existing research consensus protocols rarely discuss reconfiguration. Those that do tightly couple membership changes to a specific algorithm. This prevents the independent upgrade of individual building blocks and forces expensive downtime when transitioning to new protocol implementations. Instead, modularity is essential for maintainability and system evolution in production deployments. We present \sys{}, a reconfiguration engine designed to treat consensus protocols as interchangeable modules. By introducing a distinction between a consensus protocol's \textit{inner log} and a sanitized \textit{outer log} exposed to the RSM node, \sys{} allows engineers to upgrade membership, failure thresholds, and the consensus protocol itself independently and with minimal global downtime. Our initial evaluation on the Rialo blockchain shows that this separation of concerns enables a seamless evolution of the SMR stack across a sequence of diverse protocol implementations.
\end{abstract}

\maketitle
\section{Introduction}

State-machine replication (SMR)~\cite{schneider1990smr} has emerged as the fundamental fault-tolerant core of modern distributed
applications. Most database systems use crash-fault-tolerant SMR protocols such as Raft or Paxos to defend against data loss in the event of machine failures. 
Spanner, Google's geo-distributed flagship database uses Multi-Paxos at its core; Neo4J, a popular graph database, uses Raft, while Cassandra leverages Accord~\cite{antoniadis2023accord},
a leaderless consensus protocol. Blockchain systems such as Rialo~\cite{rialo2026blockchain} or Sui~\cite{sui2026blockchain} similarly derive their security guarantees from Byzantine-fault-tolerant (BFT) consensus protocols like PBFT~\cite{castro1999pbft}, 
Hotstuff~\cite{yin2019hotstuff}, Mysticeti~\cite{babel2025mysticeti} or Autobahn~\cite{giridharan2024autobahn}.

Consensus protocols have a hard job: they must offer strong safety and liveness guarantees at high throughput and low latency. Research in this space is 
rich~\cite{giridharan2024autobahn,castro1999pbft,babel2025mysticeti,cowling2006hq, kotla2007zyzzyva, clement2009aardvark, miller2016honeybadgerbft,gueta2019sbft, neiheiser2021kauri, spiegelman2022bullshark, danezis2022tusk, antunes2024aleabft}. Consensus \textit{engineers}, however, have an even harder job! Not only must they deliver on the promises made by the protocol, 
but they must additionally face the operational realities of modern distributed systems, both technical and socio-technical. 

On the technical front, engineers must grapple with the realities of hardware degradation, network partitions and shifting load patterns. 
Whether it is a global database like Spanner migrating a shard between continents to follow user activity, permissioned blockchains like Sui or Rialo rotating 
its validator set to ensure forward security, or a graph database removing a misconfigured machine suffering from a fail-slow fault,  the system must support 
the dynamic addition and removal of consensus participants.  Unfortunately,  academic research on consensus primarily assumes a fixed universe of participants, $P$, 
known a priori and immutable for the duration of the system’s execution. In this idealised static model, the quorum size $Q$ is constant, the failure threshold $f$ is fixed, and 
the identity of every participant is hard-coded into the protocol state. A survey of recent consensus papers published at top security/systems venues show that only a small fraction had a detailed \textit{reconfiguration protocol} to change the participant set. 

On the socio-technical front, engineers must also navigate organisational changes, team restructurings, and evolving business requirements that impact system design and operation.
 There are two primary software engineering challenges: 1) modularity is essential for future-proofing 2) code evolves.  Modularity, even inside of the consensus module, 
 is key for maintainability. A realistic SMR implementation 
contains several components: the data dissemination layer, the ordering layer, the execution engine, and the reconfiguration logic among others. Each of these components should be modifiable and
 testable independently.  Moreover, these building-blocks
\textit{evolve} in ways that are often unpredictable. A database or blockchain company rarely deploys a single consensus protocol. Instead, it deploys a sequence of consensus implementations.
 Each implementation improves over time as the company matures. Sui, for instance, a popular blockchain system, recently announced a "v2" of their internal consensus algorithm Mysticeti~\cite{sui2025mysticetiv2} while Cassandra recently migrated to the leaderless Accord CFT consensus protocol~\cite{apache2023accord}. Overall correctness must not only hold across a dynamic set of participants,
  but across a sequence of consensus protocol implementations! 
Unfortunately,  all existing reconfiguration algorithms today tightly couple the reconfiguration logic with a \textit{single} consensus protocol. 
There does not currently exist a clean way to update consensus. Incremental updates are hard enough; dramatic changes, including adoption of a new protocol entirely, are expensive and take years. Evolution usually results in protocol downtime and code bases often lose the ability to access old committed transactions when consensus changes. 

In this paper, we argue that a practical reconfiguration protocol should achieve the following three properties: 1) arbitrary membership changes 2) full modularity 3) minimal downtime: 
\begin{itemize}[leftmargin=*]
\item \textbf{Arbitrary Membership Change} It should be possible to replace any and all participants from one configuration to the next. This consideration is especially important in blockchain systems where validators have stake associated with their votes.
\item  \textbf{Full Modularity} The reconfiguration logic should be strictly independent of all other components in the SMR node. It should be possible to transition the underlying consensus engine from one arbitrary consensus protocol to another, without knowing anything about the details, except for the fact that it satisfies the traditional safety and liveness definitions. 
\item  \textbf{Minimal Downtime} We contend that the operational benefits of satisfying Goals 1 and 2 outweigh the benefits of achieving optimal performance during a reconfiguration event. Reconfiguration is traditionally rare. As such, we can tolerate brief latency spikes during the reconfiguration process as long as downtime remains minimal. 
\end{itemize} 

This paper presents the design and implementation of a novel reconfiguration engine, \sys{}, that satisfies all three properties. The reconfiguration logic makes no assumption about the underlying consensus protocol used to order transactions, only that it satisfies the safety and liveness property of consensus.  Yet it achieves full generality: two epochs $i$ and $i+1$ can differ in their consensus protocol, membership, as well as failure thresholds. \sys{} achieves this with minimal downtime. 

To achieve this, we take as our starting point the ideas of Horizontal Paxos~\cite{lamport2010reconfiguring}, a classic reconfiguration protocol for crash-fault-tolerant (CFT) systems. Each consensus's log within an epoch provides a global consistent order across replicas that can be leveraged to disseminate reconfiguration information. To achieve full generality, however, we make a second observation: the log of operations generated by the consensus protocol itself need not necessarily be the log exposed by the consensus engine to other components of the SMR node. We can lie! Specifically, a transaction marked as committed in a consensus protocol's \textit{inner log} need not necessarily be marked committed in the \textit{outer log} exposed to other components, as long as the overall SMR properties are preserved.  Fundamentally, this idea is not new. Transactions submitted to multiple leaders must be deduplicated in most leader-based systems, BFT or CFT. Blockchain systems check external 
validity predicates prior to executing the transaction. 

Applied to reconfiguration, however, distinguishing the inner log and outer log in this way is powerful. It allows seamless transition between consensus epochs without making \textit{any} assumptions about the protocols themselves. For instance, it achieves even greater generality than the original Horizontal Paxos algorithm: \sys{} need not bound the number of concurrent/pending transactions in the system.

We implement this idea as part of the Rialo blockchain, cleanly modularising each RSM component for future upgradeability.

\section{System Model}
\label{sec:model}

Practical deployments of state machine replication protocols must support 1) a changing set of participants 2) updates to the consensus protocol itself. 
Existing formalisms rarely consider this reality. We first introduce formalisms that capture the long-term, evolving nature of real SMR deployments that 
must continue to provide safety and liveness guarantees across a sequence of changing configurations and consensus protocols.

An SMR system must satisfy the following top-level properties: \neil{We should probably put the propose/decide definitions and the part about how consensus 
outputs a log before here instead of after so these properties are easier to follow.}
\begin{itemize}
    \item \textbf{Safety.} If two honest nodes commit transactions $(j,x)$ and $(j,x')$, respectively, for the same log position $j$, then $x=x'$.
    \item \textbf{Liveness.} If an honest node proposes a transaction $x$, then all honest nodes eventually commit $(j,x)$ for some log position $j$.
    \item \textbf{Integrity.} A transaction $x$ appears in the log at most once. In particular, there do not exist two distinct log positions $j\neq j'$
     such that honest nodes decide $(j,x)$ and $(j',x)$.
    \item \textbf{External Validity.} For any committed transaction $x$, \textsc{ExVal}$(x)=true$, where \textsc{ExVal} is a predicate that checks whether 
    $x$ upholds all application invariants.
\end{itemize}

These are the properties that external clients observe. The SMR system can ingest this log to materialize shared state. Internally, however, the SMR system 
is composed of a sequence of epochs $e_1, e_2, \ldots$. Each epoch $e_i$ is associated with a membership configuration $M_i$, which specifies a set of $n_i$ 
participating nodes, and a consensus protocol $C_i$. The protocol $C_i$ exposes \textit{propose}$(x)$ and \textit{decide}$(j,x)$ events,
 where \textit{decide}$(j,x)$ indicates that transaction $x$ is chosen for log position $j$. We assume that $C_i$ satisfies the safety and 
 liveness properties for consensus - it outputs a totally ordered log, but make no additional assumptions about the internal structure of $C_i$.
Nodes in a configuration communicate over an asynchronous network that places no bounds on message delivery times. We assume that in an 
epoch $e_i$ with $n_i$ nodes, at most $f_i$ nodes may fail during execution. Failures follow either the crash fault model or the Byzantine fault model. 
In the crash fault model, a faulty node may halt at any time but otherwise behaves correctly. In the Byzantine fault model, faulty nodes may behave 
arbitrarily. To tolerate $f_i$ failures, we assume $n_i\geq 2f_i+1$ under crash faults and $n\geq 3f_i+1$ under Byzantine faults.

The system combines all of the inner logs to produce a final, sanitized, totally ordered log called the \textit{outer log} that satisfies the aforementioned SMR properties. This layered structure captures the reality of modern SMR systems, where the consensus protocol is but one component in a 
larger architecture.

\section{Design}
\label{sec:design}
In this section, we present the design of \sys{}, our modular architecture that supports the independent evolution of individual system components.

\begin{figure}[t!]
    \centering
    \includegraphics[width=\linewidth]{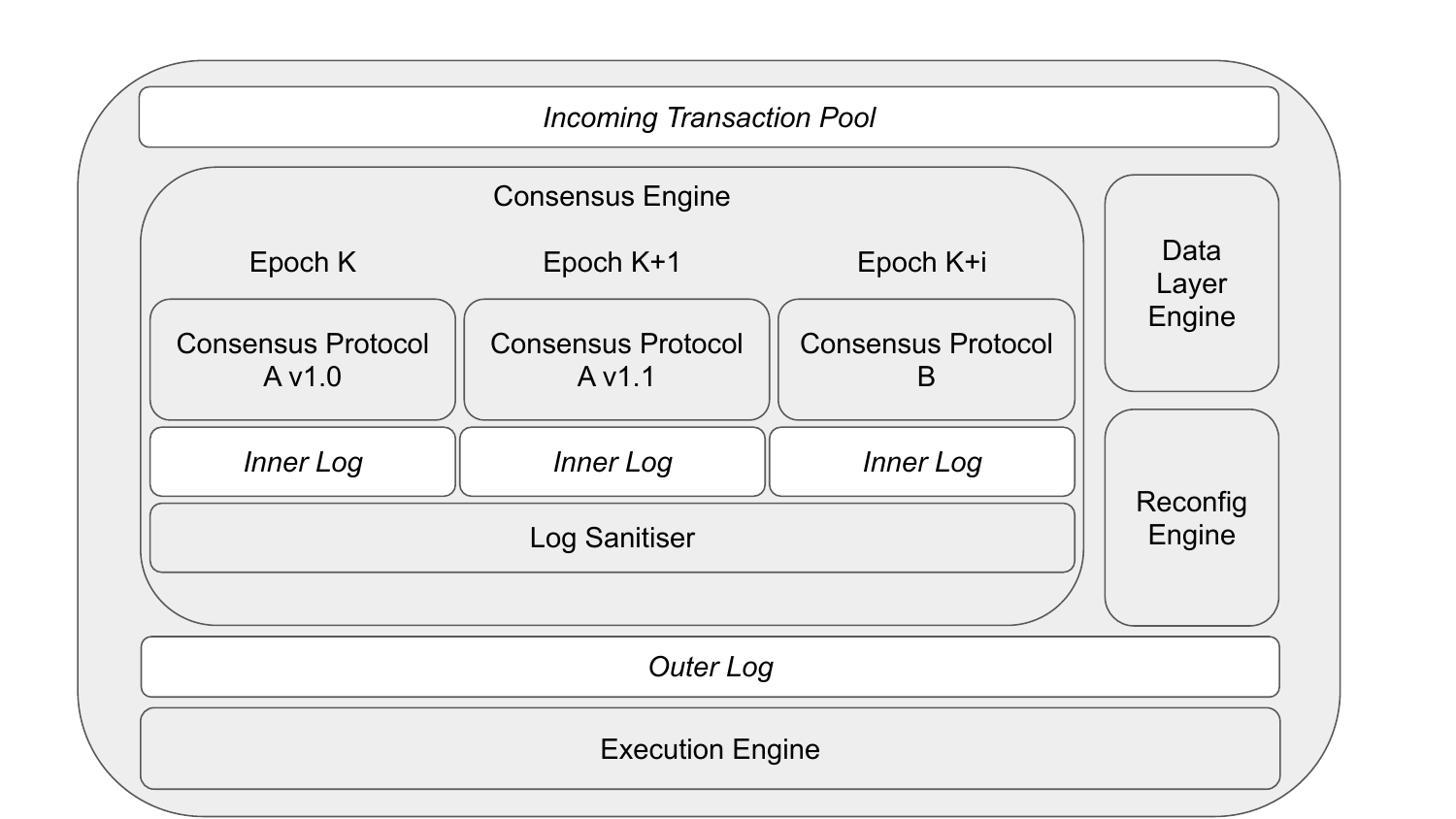}
    \caption{RSM Node SubComponents}
    \label{fig:validator_components}

\end{figure}

\subsection{RSM Node Architecture}
\label{subsec:rsm_architecture}
An RSM node typically consists of four key components (Figure~\ref{fig:validator_components}):  1) a data dissemination layer, 
tasked with disseminating 
transaction data to all nodes in the system 2) a consensus engine, tasked with outputting an ordered log of transactions 
consistently across 
all RSM replicas 3) an execution engine, tasked with ingesting this log to materialize the final application state 
4) the reconfiguration engine, 
tasked with upgrading the different components of the system. These components work closely together to maintain the 
correctness of the replicated 
state machine. In a production implementation of an RSM node, these units must be designed in a modular fashion such that they can be independently 
modified and upgraded over time.  This requires careful interface design, and clear separation of concerns at each component boundary. In this paper, 
we focus on the design of the reconfiguration engine, responsible for coordinating the transition between epochs. 

\subsection{Reconfiguration Engine}
\label{subsec:reconfig_engine}
In line with our stated objective, our reconfiguration engine, \sys{},  makes no assumption about the underlying consensus protocol used to order transactions, 
only that it satisfies the safety and liveness property of consensus (Section \ref{sec:model}).  Yet it achieves full generality: two epochs $i$ and $i+1$ 
can differ in their consensus protocol, membership, as well as failure thresholds. \sys{} achieves this with minimal downtime, by leveraging two observations:
1) a given consensus protocol within an epoch already outputs a totally ordered \textit{inner log} of transactions across all correct nodes in the epoch 2) 
this \textit{inner log} need not be directly executed, instead, it can be sanitised and transformed to produce a new \textit{outer} log. Only the outer log is exposed to the other node subcomponents.

The reconfiguration engine runs a three stage protocol ensuring a safe (and live) transition from one epoch configuration to the next. 
The \textit{prepare phase} ensures that members of the new configuration prepare their local state and components to begin processing 
transactions. The \textit{handover phase} coordinates between the two epoch configurations to seamlessly transfer control and establish a
 trust chain between each configuration. This trust chain provides evidence to external users that the previous configuration agrees and supports
  the next configuration. Finally, the \textit{shutdown phase} ensures that the outgoing members of the old configuration can safely wind down while
  preserving liveness.  The log sanitizer component at each replica regulates the translation from the consensus inner logs to the outer log exposed 
  to other components of the RSM node.

We next describe each phase in turn, illustrating the transition between epochs $i$ and $i+1$.  Epoch $i$ uses consensus protocol $C_i$ with membership 
$M_i$ and failure threshold $f_i$, while epoch $i+1$ uses consensus protocol $C_{i+1}$ with membership $M_{i+1}$ and 
failure threshold $f_{i+1}$.  Consensus protocol $C_{i}$ produces inner log $L_{i}$, while $C_{i+1}$ 
produces inner log $L_{i+1}$. By definition of consensus (Section~\ref{sec:model}), all correct nodes in epoch $i$ agree on the same log $L_{i}$, and all
 correct nodes in epoch $i+1$ agree on the same log $L_{i+1}$. Similarly, if a correct node sees $T_i$ at position $i$ in the log, all correct nodes 
 eventually see $T_i$ at position $i$ in the log. \neil{I'm not sure if this is true with our reconfiguration protocol, since an honest node in the old 
 configuration might commit a slot past the snapshot, which is not guaranteed to be committed by all honest} \agc{The thing that pops into my mind reading this is that 
 the core challenge of epoch change becomes safely appending $L_{i+1}$ onto an appropriate pre-fix of $L_{i}$}
For simplicity, we assign different identities to all nodes in $M_i$ and $M_{i+1}$. We write $R_{(i,k)}$ to denote the $k$-th replica in epoch $i$. In
 practice, there may be significant overlap between configurations. As shown in Figure~\ref{fig:validator_components}, nodes then simply run concurrent 
 epoch instances.

\par \textbf{Prepare Phase} An operator submits an \EpochChange{} transaction to consensus instance $C_i$. 
The logic by which the \EpochChange{} transaction is generated and submitted is outside the scope of this paper. It may be generated by a trusted operator, 
via an on-chain governance mechanism, or as a result of a protocol update. This transaction is treated as any other transaction in the RSM node. 
By properties of consensus, it will eventually appear in the log of consensus $C_i$ at some position $K$. The epoch change transaction consists of the identity of the current epoch being transitioned from $i$, the new epoch being transitioned to $e_{i+1}$. The latter additionally contains the new configuration parameters: the membership $M_{i+1}$, failure threshold $f_{i+1}$, and consensus protocol $C_{i+1}$.

\protocol{$R_{(i,k)}$: Detects \EpochChange{$i$, $e_{i+1}$} transaction} Upon commitment of the \EpochChange{} transaction at position $K$ in $L_i$, 
the log sanitizer of $R_{(i,k)}$ begins waiting for \Ready{} messages from all members of epoch $i+1$.  No other changes occur. Notably, the log sanitizer 
continues to process and output transactions from $L_i$ to the outer log $O$ for execution as normal. \nc{Is this too vague?}

\protocol{$R_{(i+1,k)}$: Detects \EpochChange{$i$, $e_{i+1}$} transaction} Upon observing an \EpochChange{} message, each incoming replica member $R_{(i+1,k)}$ of
 the new epoch $i+1$ synchronises its local components to begin processing transactions from epoch $i+1$. This includes initializing its consensus 
 protocol $C_{i+1}$, ensuring that it is sufficiently up-to-date with the current outer log, as well as synchronising application and execution state.   Once ready, it submits a
  \Ready{$i$, $i+1$, hash, $sig_k$} transaction to $C_{i}$ where $i$ and $i+1$ denote respectively the epochs being transitioned from and to, $hash$ is the hash of the \EpochChange{} transaction, and $sig_k$ is a signature of the \Ready{} message. By properties of consensus, this transaction will eventually appear in the log of consensus
   $C_i$ at some position $j$. \\

\protocol{$R_{(i+1,k)}$: Submits \Ready{} message as a transaction to $C_{i}$}

\par \textbf{Handover Phase} Two possible cases arise: 1) all \Ready{} transactions for epoch $i+1$ appear committed in the log before a new epoch change 
transaction is observed 2) a new epoch change transaction commits, preempting the initial epoch change. We discuss each scenario in turn. 
Crucially, every replica in epoch $i$ can deterministically decide which case has occurred by examining the log $L_i$. By properties of consensus, 
all correct replicas in epoch $i$ agree on the same log $L_i$.

\textit{Case 1: Successful Handover} If all \Ready{} transactions from members of epoch $i+1$ appear committed in the log before a new epoch change 
transaction is observed, $R_{i,k}$ considers the handover to be successful. It forms a \textit{handover certificate} for epoch $i$, indicating that 
epoch $i$ finished at position $h$ in log $L_i$, where $h$ is the log position of the last \Ready{} transaction. The handover certificate contains the id of the old epoch $i$, information about the new epoch $E_i$, the position of the handover certificate in the log $h$, as well as the hash of the previous handover certificate for epoch $i$ ($hash_i$).
 It signs the handover certificate and submits this as a \Done{} transaction to consensus $C_{i+1}$. Replica $R_{(i,k)}$  continues participating in protocol $C_i$ as normal. \\ 

\protocol{$R_{(i,k)}$: Forms signed \Handover{$i$, $E_{i+1}$, $h$, $hash_i$} certificate.  Submits as transaction to $C_{i+1}$}
Because the reconfiguration protocol makes no assumption about the underlying consensus protocol, it is possible that transactions 
continue to be committed in $L_i$ after position $h$. This is true for multi-proposer systems~\cite{stathakopoulou2019mirbft, giridharan2024autobahn,babel2025mysticeti}, as well as systems that allow for parallel consensus instances~\cite{castro1999pbft}. Therein lies the power of combining horizontal reconfiguration with the inner/outer log distinction.
 By definition, all replicas output the same inner log. They consequently all form the same handover certificate, and stop processing the inner log of epoch $i$ at the same position $h$ in $L_i$. 
 The log sanitizer of $R_{(i,k)}$ simply ignores all transactions at positions $h'>h$ in $L_i$
  and does not submit them to the outer log $O$. They will never appear committed to the overall SMR node. Submitting \Ready{} messages to
   consensus increases latency for new configurations to become active, but is key to achieving minimal downtime. Transactions continue to be
   committed and executed in epoch $i$ until the handover certificate is formed at position $h$. Only transactions after position $h$ are discarded.

\textit{Case 2: Preemption} Members of the new epoch may fail to reply, or fail to synchronize their state in a timely fashion such that they are ready
 to proceed in the new epoch. The protocol simply allows later epoch changes to preempt earlier ones. If a new \EpochChange{} transaction appears in 
 $L_i$ before all \Ready{} transactions from epoch $i+1$ have been committed, replica $R_{(i,k)}$ in epoch $i$ considers the handover to have been preempted.
  It informs the log sanitizer to stop listening for \Ready{} transactions from epoch $i+1$, and to instead listen for \Ready{} transactions from members
   of the new epoch $i+2$. The protocol then proceeds as before.

\par \textbf{Shutdown Phase} The shutdown phase ensures that members of epoch $i$ can safely wind down and terminate the epoch without violating safety or liveness. To achieve this,
 the protocol should make sure that 1) the new epoch is fully operational before old epoch members stop participating in the consensus protocol $C_i$ 2) 
 there is a single such epoch.   \\ 
\neil{General comment: the distinction between replicas in epoch i vs i+1 is a bit small and easy to miss with just the subscript, can we add a heading above?}
\agc{Would it be useful to use $incoming := replicas in epoch_{i+1}$ and $outgoing := replicas in epoch{i}$ ?}

\protocol{$R_{(i,k)}$: Detects $f_i + 1$ \Done{i, $e_{i+1}$, $hash_i$} transactions in $C_{i+1}$}
\vspace{2pt}
Upon detecting $f_i+1$ \Done{} transactions from distinct members of epoch $i$ in consensus $C_{i+1}$, replica $R_{(i,k)}$ 
considers epoch $i+1$ to be fully active. At least one correct member of epoch $i$ has submitted a \Done{} transaction
 (there is a maximum of $f_i$ malicious replicas), indicating that it has safely formed a handover certificate and submitted 
 it to consensus $C_{i+1}$. The presence of an honest replica's \Done{} transaction offers two guarantees: 1) 
  all subsequent members can recover sufficient state (as the \Done{} transaction includes a signed checkpoint and 2) the guarantee 
  that no other epoch is active. Honest nodes would never send conflicting \Done{} transactions. It can now safely stop participating in consensus protocol $C_i$ and shutdown the epoch. \\
\protocol{$R_{(i+1,k)}$: Detects $f_i + 1$ \Done{i, $e_{i+1}$, $hash_i$} transactions in $C_{i+1}$}
Upon detecting $f_i+1$ \Done{} transactions from members of epoch $i$ in consensus $C_{i+1}$, replica $R_{(i+1,k)}$
considers epoch $i+1$ to be fully active. At least one correct member of epoch $i$ has submitted a \Done{} transaction
 (there is a maximum of $f_i$ malicious replicas), indicating that it has safely formed a handover certificate and submitted it to consensus $C_{i+1}$.
By properties of consensus, all correct members of epoch $i+1$ will see these $f_i+1$ \Done{} transactions at the same positions in $L_{i+1}$.
The log sanitizer can now begin outputting committed transactions from $L_{i+1}$ to the outer log $O$ for execution.
 All other replicas in epoch $i+1$ proceed similarly.

\subsection{Worked Example}
We provide a short worked example to further explain our approach. Consider epochs $k$ and $k+1$ as shown in Figure~\ref{fig:epoch_transition}.
 Epoch $k$ uses consensus protocol $C_k$ (Mysticeti) with membership $M_k = \{R_{(k,1)}, R_{(k,2)}, R_{(k,3), R_{(k,4)}}\}$ and failure threshold
  $f_k = 1$, while epoch $k+1$ uses consensus protocol $C_{k+1}$ (PBFT) with membership $M_{k+1} = \{R_{(k+1,5)}, R_{(k+1,6)}, R_{(k+1,7), R_{(k+1,8)}}\}$ 
  and failure threshold $f_{k+1} = 1$.  \agc{in figure, i blieve the DONE should show up in the outer log?  or is it contained 
  in the epoch engine based on this descirption?}.

Transactions $T_1, T_2$ commit in inner log $L_k$ at positions $1$ and $2$. The log sanitizer forwards them to the outer log $O$.
 At position $3$, an epoch change transaction is initiated, transitioning the system from epoch $k$ to $k+1$ with a disjoint set of validators. The log sanitizer begins waiting to see \Ready{} transactions appear in $L_k$. While waiting for these transactions, $T_3$ and $T_4$ commit in $L_k$ and the log sanitizer forwards them to $O$. Upon detecting the final \Ready{} transaction at position $9$ in the log, the node forms a handover certificate finalising the epoch. It submits this certificate to $C_{i+1}$ in epoch $k+1$ and discards transactions $T_5$ and $T_6$ (Figure~\ref{fig:epoch_transition}). Upon seeing two \Done{} messages, epoch $k+1$ becomes active. $T_5$ is resubmitted by a client and commits at position 3 of inner log $L_{k+1}$. Transactions $T_7$ and $T_8$ commit at positions 4 and 5 of inner log $L_{k+1}$. All are forwarded to the outer log for execution.  The outer log exposes a single totally ordered log consisting of transactions $T_1, T_2, T_3, T_4, T_5, T_7, T_8$ and a trust chain of epoch transitions.
 
 \begin{figure}[h]
    \centering
    \includegraphics[width=\linewidth]{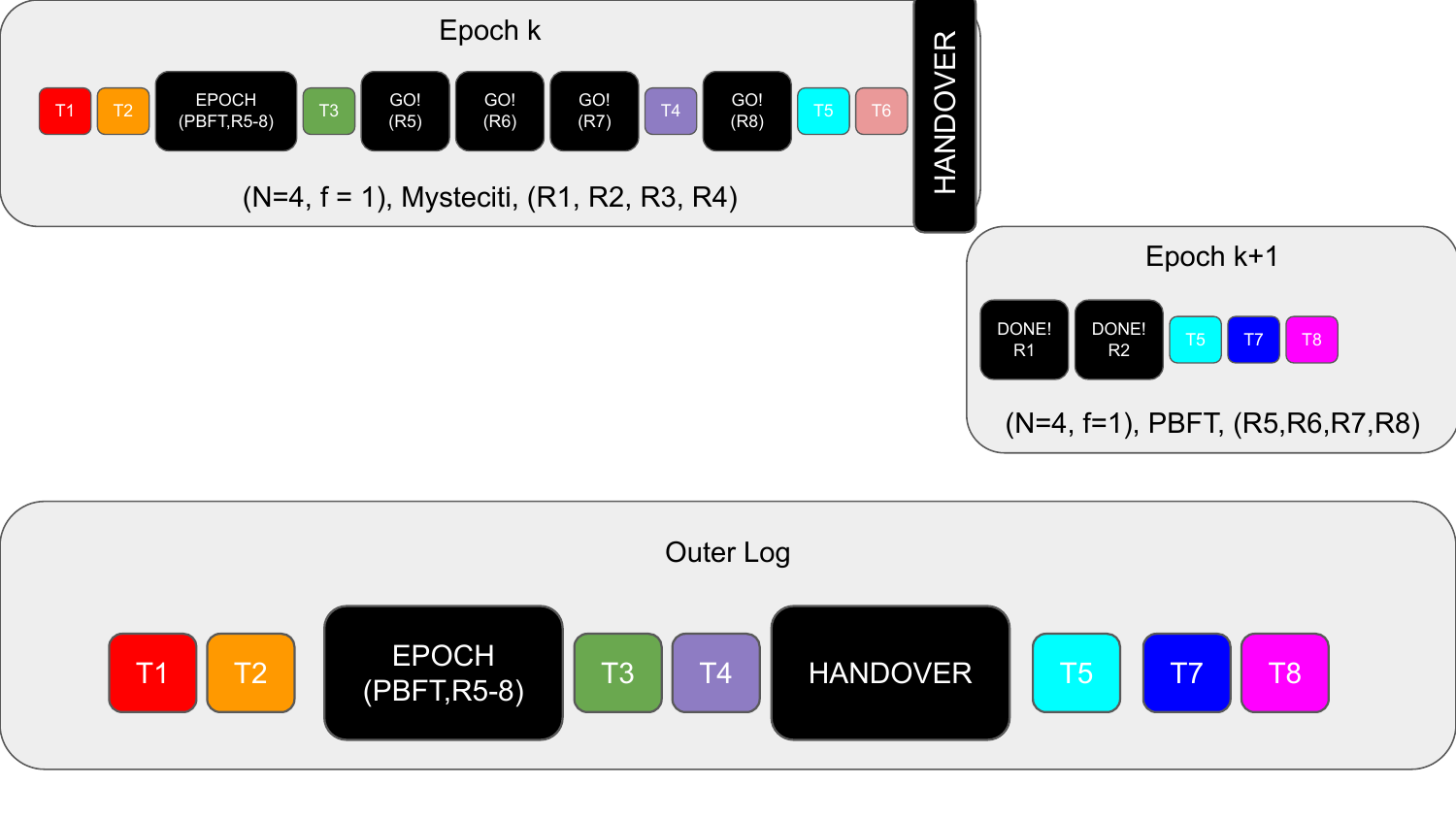}
    \caption{Epoch Transition Example}
    \label{fig:epoch_transition}    
\end{figure}

\section{Proofs}

We provide short proof sketches describing how our reconfiguration engine \sys{} preserves
the SMR properties of safety and liveness during epoch transitions for 
arbitrary consensus protocols and membership configurations.

\nc{Add the various theorems here if there is time}
\agc{properties that I find/found important:  
(1) Trust chain is unbroken, $i$ signs $HC_{i+1}$ 
(2) in epoch $i$, $HC_{i -> i+1}$ is formed only if $HC.epoch_change$ is the maximal preceding epoch change in $log{i}$
}

\begin{theorem}[Safety]
If two honest nodes decide $(j, x)$ and $(j, x')$ for the same outer log position $j$, then $x = x'$.
\end{theorem}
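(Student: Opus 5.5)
We argue by induction on the epoch sequence, showing that every honest node's outer log $O$ is the concatenation $P_1 \cdot P_2 \cdots$, where $P_i$ is a prefix of the inner log $L_i$ that is fully determined by $L_i$ and $L_{i+1}$. Then any two honest nodes' outer logs are prefixes of a single common sequence, and safety follows.

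The concrete objects are the cut points. For each epoch $i$ that completes a handover, let $h_i$ be the position in $L_i$ of the last \Ready{} transaction of the unique non-preempted epoch change. Let $d_{i+1}$ be the position in $L_{i+1}$ of the $(f_i+1)$-th \Done{} transaction from distinct members of epoch $i$. We set $P_1$ to be $L_1[1..h_1]$. For $i>1$, $P_i$ is $L_i$ restricted to the positions after $d_i$ and up to $h_i$, or the unbounded suffix of $L_i$ after $d_i$ if epoch $i$ is the current epoch. \Ready{}, \Done{} and \EpochChange{} transactions are treated as control entries and are filtered out deterministically. Deduplication for Integrity is handled separately and is also a deterministic function of the prefix, so it does not affect this argument.

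The key steps, in order, are as follows.
\begin{enumerate}
\item Apply consensus safety of $C_i$: all honest members of epoch $i$ see the same inner log $L_i$.
\item Show that $h_i$ is a deterministic function of $L_i$. The Handover-phase rule is that an epoch change is preempted iff a later \EpochChange{} precedes its final \Ready{}. Applied to a common $L_i$, this selects the same successor epoch and the same $h_i$ at every honest node. The sanitizer also never emits entries of $L_i$ beyond $h_i$.
\item Apply consensus safety of $C_{i+1}$ to fix $d_{i+1}$. Because honest members of epoch $i$ form identical handover certificates, any $f_i+1$ \Done{} transactions include at least one honest one, and that one names the same successor epoch. So no honest node can activate a competing epoch $i+1'$. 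This requires arguing that a conflicting \Done{} quorum cannot exist for another epoch, since it would need an honest signer on a different certificate.
\item Conclude by induction that honest node outputs are consistent. Each honest node's emitted sequence is the concatenation of $P_1,\dots,P_{i-1}$ followed by a prefix of $P_i$. Hence for the same outer position $j$, both nodes output the same entry of the same $P_i$, so $x=x'$.
\end{enumerate}

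I expect the main obstacle to be the boundary interaction between epochs. An honest node in $M_i$ may have emitted a prefix of $L_i$ before learning $h_i$. I must show it never emitted anything past $h_i$. This holds because the sanitizer processes $L_i$ in order and reaches position $h_i$, and hence decides the cut, before emitting any later position.

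Symmetrically, honest members of $M_{i+1}$ must begin emitting exactly after the common prefix $P_1\cdots P_i$. They must not rely on their own possibly stale view of epoch $i$. For this I would use the fact that the honest \Done{} certificate carries $h_i$ and the hash chain, so the incoming node can fix $P_i$ from the certificate.

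I would try to formalize both points as a single invariant: at every honest node, the emitted outer log is a prefix of $P_1P_2\cdots$. I would then verify that each of the three protocol phases preserves this invariant.
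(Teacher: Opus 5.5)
Your proposal is correct and uses the same core argument as the paper's proof: consensus safety of each $C_i$ gives honest nodes identical inner logs, and the log sanitizer is a deterministic, in-order function of those logs, so honest nodes' outer logs agree position by position. Your explicit cut points $h_i$ and $d_{i+1}$, the quorum-intersection step on $f_i+1$ Done transactions, and the prefix invariant unpack what the paper compresses into ``$S$ is deterministic''; in doing so they justify two things the paper's proof takes for granted, namely that both nodes follow the same sequence of epoch instances and that they map outer position $j$ to the same inner position $(e,k)$.
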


\begin{proof}
Suppose, for contradiction, that $x \neq x'$. Let $N_1$ and $N_2$ be two honest nodes such that $N_1$ decides $(j, x)$ and $N_2$ decides $(j, x')$. Each outer log position corresponds to an inner log position in some epoch. Let $k$ denote the inner log position corresponding to outer position $j$, and $e$ the corresponding epoch. For each epoch $i$, let $L_i$ and $L_i'$ denote the inner logs produced by epoch $i$ as observed by $N_1$ and $N_2$, respectively. Let $L_e^{\le k}$ (resp., $(L_e')^{\le k}$) denote the prefix of $L_e$ (resp., $L_e'$) truncated to its first $k$ entries.

Node $N_1$ computes $x$ by applying the deterministic log sanitizer $S$ to the concatenation
$L_1||L_2||\cdots|| L_{e-1}|| L_e^{\le k}$,
and similarly, $N_2$ computes $x'$ by applying $S$ to
$L_1'||L_2'||\cdots||L_{e-1}'||(L_e')^{\le k}$.

By the safety property of each consensus protocol $C_i$, all honest nodes agree on the contents of the inner log for epoch $i$. Therefore, for all $i<e$, we have $L_i = L_i'$, and moreover $L_e^{\le k} = (L_e')^{\le k}$. Since $S$ is deterministic, it follows that $x = x'$, contradicting our assumption.

Hence, $x = x'$, completing the proof.
\end{proof}

\section{Implementation and Evaluation}

\begin{figure*}[!htb]
    \centering
    \begin{subfigure}[t]{0.48\textwidth}
        \centering
        \includegraphics[width=\textwidth]{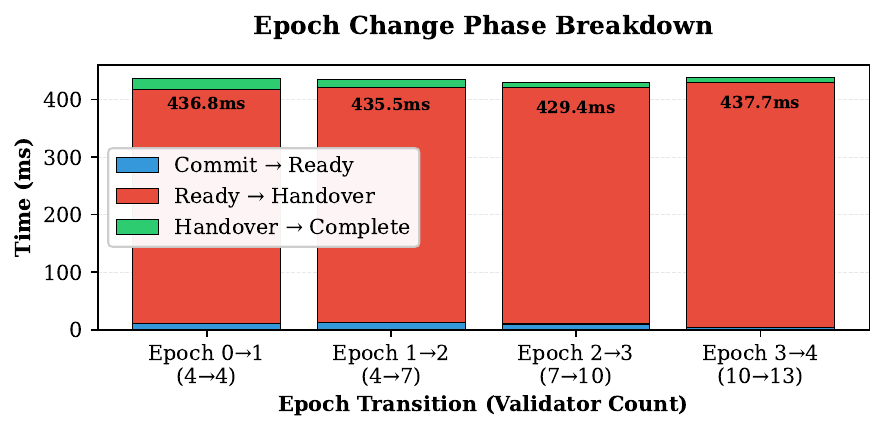}
        \caption{Breakdown of each transition showing time spent in each stage.}
        \label{fig:epoch-breakdown}
    \end{subfigure}
    \hfill
    \begin{subfigure}[t]{0.48\textwidth}
        \centering
        \includegraphics[width=\textwidth]{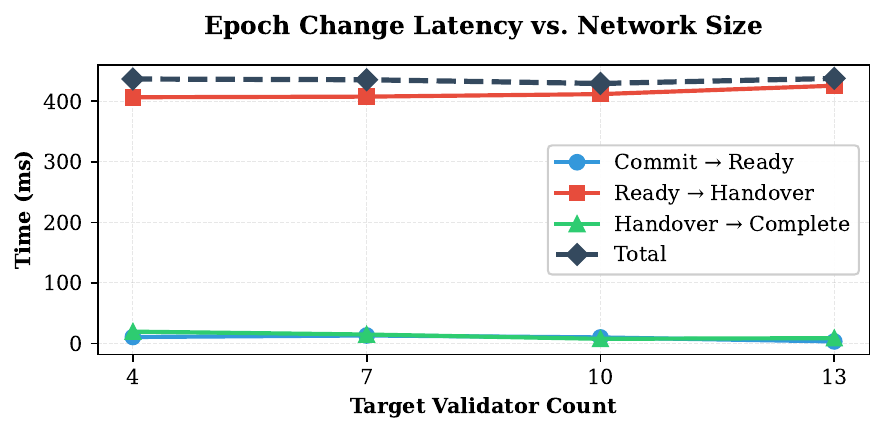}
        \caption{Latency scaling with validator count.}
        \label{fig:epoch-scaling}
    \end{subfigure}
    \label{fig:epoch-performance}
\end{figure*}

This modular architecture is being implemented as part of the Rialo blockchain~\cite{rialo2026blockchain}. 
It provides the starting point for the seamless integration of future upgrades to the consensus engine, 
execution engine or data dissemination layer. 

Initial results are promising. We evaluate the performance of our reconfiguration protocol on a local testbed to measure the latency of reconfiguration operations. We measure the time required to complete epoch transitions with varying validator set sizes. We test Rialo with 4 epoch transitions: from 4 to 4 validators (no size change), 4 to 7 validators, 7 to 10 validators, and 10 to 13 validators.

Figure~\ref{fig:epoch-breakdown} shows the performance characteristics of epoch transitions. We break down each transition into three phases: (1) from EpochChange commit to Ready message, (2) from Ready message to Handover (when the Ready quorum is reached), and (3) from Handover to completion.

The results show that the epoch change protocol performs efficiently across all tested configurations. We observe that the size of the validator set has a minimal impact on the overall latency of epoch transitions. The Ready to Handover phase takes approximately 93\% of the total epoch change latency. This shows the commit of Ready messages in consensus is the main bottleneck. We expect that this would further increase when we deploy the protocol in a geo-distributed setting.

\section{Related Work}

Reconfiguration in state machine replication has been carefully studied, both in the context of crash fault tolerance~\cite{lamport2009vertical,ongaro2014search,whittaker2021matchmaker,lorch2006smart,jehl2014arec,jehl2015smartmerge} and Byzantine fault tolerance~\cite{howard2023ccf,bessani2014smart,duan22dyno}.  Unfortunately, the majority of consensus algorithms today do not explicitly discuss reconfiguration. Those that do so propose algorithms that are specific to a particular protocol, and thus tightly coupled with it. 

\par \textbf{CFT Reconfiguration.} Classical state machine replication protocols like Paxos and Raft handle reconfiguration by treating configuration changes as special log entries, that must themselves go through consensus. The key challenge comes from handling the transition from one configuration to another. The work discusses a number of strategies to achieve that, often halting progress or limiting availability during the transition. \sys{} is able to achieve identical results without limiting pipelining at the cost of delaying reconfiguration taking effect for additional consensus rounds. 
In SMART~\cite{lorch2006smart}, reconfiguration of the system is managed
by creating an additional group of replicas. The two groups
of replicas run parallel Paxos instances until the system state is fully migrated to the new group. The popular\emph{Raft}~\cite{ongaro2014search} consensus protocol initially proposed a joint-consensus approach where majorities from both old and new configurations must overlap, later refining this to single-server membership changes to simplify safety. 

Vertical approaches to reconfiguration~\cite{lamport2009vertical} decouple the configuration from the consensus protocol. Vertical Paxos, for instance, allows the set of acceptors to change within a single instance by relying on an auxiliary \emph{configuration master} to manage membership. This ``vertical'' shift enables reconfiguration without stopping the stream of commands but introduces a dependency on an external, highly available master. Matchmaker Paxos~\cite{whittaker2021matchmaker} generalizes Vertical Paxos by replacing the auxiliary master with a set of ``matchmakers.'' These matchmakers persist configuration data, allowing the system to reconfigure without a stall in the pipeline or a single point of failure. 

\par \textbf{BFT Reconfiguration.} BFT-SMART~\cite{bessani2014smart} extends the ideas of Horizontal Paxos to the Byzantine setting, allowing replicas to be added or removed one at a time through a special reconfiguration command.  Duan et al.~\cite{duan22dyno} offers a formal treatment of BFT with dynamic membership - and highlights the different possible optimisations as a function of assumptions on the system, including assuming a fraction of correct replicas never leave the system.
BChain takes a different approach called rechaining for BFT protocols based on chain-replication. When a BFT service experiences failures or asynchrony, the head reorders the chain when a
replica is suspected to be faulty, so that a fault cannot affect the critical path.

\par \textbf{Permissionless Blockchains} Permissionless blockchain systems such as Ethereum require that the consensus protocol remain live even if 1) a large number of participants is offline 2) the system has high churn rates, where the set of participants changes frequently. Neu et al. \cite{neu2025limits} explores the theoretical limits of consensus under dynamic availability, where nodes can join or leave arbitrarily. In this orthogonal setting, it identifies necessary conditions for safety and highlights that existing protocols often make unrealistic assumptions such as requiring synchrony.

\section{Conclusion}
This paper introduced \sys{}, a novel reconfiguration protocol that supports both arbitrary membership changes and updates to consensus, while remaining fully modular. By distinguishing between the inner and outer logs, \sys{} allows for seamless transitions between different consensus implementations without tightly coupling the reconfiguration logic to any specific protocol.

\bibliographystyle{plain}
\bibliography{references}

@inproceedings{ongaro2014search,
  title={In search of an understandable consensus algorithm},
  author={Ongaro, Diego and Ousterhout, John},
  booktitle={Proceedings of the 2014 USENIX Annual Technical Conference (USENIX ATC 14)},
  pages={305--319},
  year={2014}
}

@article{lamport2010reconfiguring,
  title={Reconfiguring a state machine},
  author={Lamport, Leslie and Malkhi, Dahlia and Zhou, Lidong},
  journal={ACM SIGACT News},
  volume={41},
  number={1},
  pages={63--73},
  year={2010},
  publisher={ACM New York, NY, USA}
}

@inproceedings{yin2019hotstuff,
  author = {Yin, Maofan and Malkhi, Dahlia and Reiter, Michael K. and Gueta, Guy Golan and Abraham, Ittai},
  title = {HotStuff: BFT Consensus with Linearity and Responsiveness},
  booktitle = {Proceedings of the 2019 ACM Symposium on Principles of Distributed Computing (PODC)},
  year = {2019}
}

@inproceedings{lamport2009vertical,
  title={Vertical paxos and primary-backup replication},
  author={Lamport, Leslie and Malkhi, Dahlia and Zhou, Lidong},
  booktitle={Proceedings of the 28th ACM symposium on Principles of distributed computing},
  pages={312--313},
  year={2009}


}

@misc{sui2025mysticetiv2,
  author = {Sui},
  title = {Mysticeti: The Next Generation of Sui Consensus},
  year = {2025},
  howpublished = {\url{https://blog.sui.io/mysticeti-v2-sui-consensus/}},
  note = {Accessed: 2026-01-30}
}

@inproceedings{whittaker2021matchmaker,
  title={Matchmaker Paxos: A reconfigurable consensus protocol},
  author={Whittaker, Michael and Giridharan, Neil and Hellerstein, Joseph M and Howard, Heidi and Stoica, Ion and Crooks, Natacha},
  booktitle={Proceedings of the 18th USENIX Symposium on Networked Systems Design and Implementation (NSDI 21)},
  pages={67--84},
  year={2021}
  }

@misc{sui2026blockchain,
  author = {Sui Foundation},
  title = {Sui: A Next-Generation Smart Contract Platform},
  year = {2026},
  howpublished = {\url{https://www.sui.io/}},
  note = {Accessed: 2026-01-30}
}

@misc{rialo2026blockchain,
  author = {Rialo},
  title = {Rialo: The High-Performance Modular Blockchain for Distributed Systems},
  year = {2026},
  howpublished = {\url{https://www.rialo.io/}},
  note = {Accessed: 2026-01-30}
}

@article{schneider1990smr,
  author = {Schneider, Fred B.},
  title = {Implementing Fault-Tolerant Services Using the State Machine Approach: A Tutorial},
  journal = {ACM Computing Surveys},
  volume = {22},
  number = {4},
  pages = {299--319},
  year = {1990}
}

@misc{apache2023accord,
  author = {Apache Software Foundation},
  title = {CEP-15: Fast General Purpose Transactions (Accord)},
  year = {2023},
  howpublished = {\url{https://cwiki.apache.org/confluence/display/CASSANDRA/CEP-15%3A+Fast+General+Purpose+Transactions}},
  note = {Accessed: 2026-01-30}
}

@article{neu2025limits,
  title={On the Limits of Consensus under Dynamic Availability and Reconfiguration},
  author={Neu, Joachim and Nieto, Javier and Ren, Ling},
  journal={arXiv preprint arXiv:2510.03625},
  year={2025}
}

@inproceedings{castro1999pbft,
  author    = {Castro, Miguel and Liskov, Barbara},
  title     = {Practical Byzantine Fault Tolerance},
  booktitle = {Proceedings of the 3rd Symposium on Operating Systems Design and Implementation (OSDI)},
  year      = {1999},
  pages     = {173--186},
  publisher = {USENIX Association}
}

@inproceedings{babel2025mysticeti,
  author    = {Babel, Kushal and Chursin, Andrey and Danezis, George and Kichidis, Anastasios and Kokoris-Kogias, Lefteris and Koshy, Arun and Sonnino, Alberto and Tian, Mingwei},
  title     = {Mysticeti: Reaching the Latency Limits with Uncertified DAGs},
  booktitle = {Network and Distributed System Security (NDSS) Symposium},
  year      = {2025},
  publisher = {The Internet Society}
}

@article{antoniadis2023accord,
  author    = {Antoniadis, Karolos and Benhaim, Julien and Desjardins, Antoine and Poroma, Elias and Gramoli, Vincent and Guerraoui, Rachio and Voron, Gauthier and Zablotchi, Igor},
  title     = {Leaderless Consensus},
  journal   = {Journal of Parallel and Distributed Computing},
  year      = {2023},
  volume    = {176},
  pages     = {1--19}
}

@article{howard2023ccf,
author = {Howard, Heidi and Alder, Fritz and Ashton, Edward and Chamayou, Amaury and Clebsch, Sylvan and Costa, Manuel and Delignat-Lavaud, Antoine and Fournet, C\'{e}dric and Jeffery, Andrew and Kerner, Matthew and Kounelis, Fotios and Kuppe, Markus A. and Maffre, Julien and Russinovich, Mark and Wintersteiger, Christoph M.},
title = {Confidential Consortium Framework: Secure Multiparty Applications with Confidentiality, Integrity, and High Availability},
year = {2023},
issue_date = {October 2023},
publisher = {VLDB Endowment},
volume = {17},
number = {2},
issn = {2150-8097},
url = {https://doi.org/10.14778/3626292.3626304},
doi = {10.14778/3626292.3626304},
journal = {Proc. VLDB Endow.},
month = oct,
pages = {225–240},
numpages = {16}
}

@article{lorch2006smart,
author = {Lorch, Jacob R. and Adya, Atul and Bolosky, William J. and Chaiken, Ronnie and Douceur, John R. and Howell, Jon},
title = {The SMART way to migrate replicated stateful services},
year = {2006},
issue_date = {October 2006},
publisher = {Association for Computing Machinery},
address = {New York, NY, USA},
volume = {40},
number = {4},
issn = {0163-5980},
url = {https://doi.org/10.1145/1218063.1217946},
doi = {10.1145/1218063.1217946},
journal = {SIGOPS Oper. Syst. Rev.},
month = apr,
pages = {103–115},
numpages = {13}
}

@inproceedings{bessani2014smart,
  author    = {Bessani, Alysson and Sousa, João and Alchieri, Eduardo E. P.},
  title     = {State Machine Replication for the Masses with BFT-SMaRt},
  booktitle = {Proceedings of the 44th Annual IEEE/IFIP International Conference on Dependable Systems and Networks (DSN)},
  year      = {2014},
  pages     = {355--366}
}

@inproceedings{stathakopoulou2019mirbft,
  author    = {Stathakopoulou, Chrysoula and Tudor, Tudor David and Vukoli{\'c}, Marko},
  title     = {Mir-BFT: High-Throughput BFT for Blockchains},
  booktitle = {Proceedings of the 2nd Workshop on Scalable and Resilient Infrastructures for Distributed Ledgers (SRIDL)},
  year      = {2019}
}

@inproceedings{cowling2006hq,
  author = {Cowling, James and Myers, Daniel and Liskov, Barbara and Rodrigues, Rodrigo and Shrira, Liuba},
  title = {HQ Replication: A Hybrid Quorum Protocol for Byzantine Fault Tolerance},
  booktitle = {Proceedings of the 7th USENIX Symposium on Operating Systems Design and Implementation (OSDI)},
  year = {2006}
}

@inproceedings{kotla2007zyzzyva,
  author = {Kotla, Ramakrishna and Dahlin, Lorenzo and Clement, Allen and Wong, Edmund and Dahlin, Mike},
  title = {Zyzzyva: Speculative Byzantine Fault Tolerance},
  booktitle = {Proceedings of the 21st ACM Symposium on Operating Systems Principles (SOSP)},
  year = {2007}
}

@inproceedings{clement2009aardvark,
  author = {Clement, Allen and Wong, Edmund and Alvisi, Lorenzo and Dahlin, Mike and Marchetti, Mirco},
  title = {Making Byzantine Fault Tolerant Systems Tolerate Byzantine Faults},
  booktitle = {Proceedings of the 6th USENIX Symposium on Networked Systems Design and Implementation (NSDI)},
  year = {2009}
}

@inproceedings{miller2016honeybadgerbft,
  author = {Miller, Andrew and Xia, Yu and Croman, Kyle and Shi, Elaine and Song, Dawn},
  title = {The Honey Badger of BFT Protocols},
  booktitle = {Proceedings of the 2016 ACM SIGSAC Conference on Computer and Communications Security (CCS)},
  year = {2016}
}

@inproceedings{gueta2019sbft,
  author = {Gueta, Guy G. and Abraham, Ittai and Grossman, Shelly and Malkhi, Dahlia and Pinkas, Benny and Reiter, Michael and Seredinschi, Dragos and Tamir, Orr and Tomescu, Alin},
  title = {SBFT: A Scalable and Decentralized Trust Infrastructure},
  booktitle = {Proceedings of the 40th IEEE Symposium on Security and Privacy (Oakland)},
  year = {2019}
}

@inproceedings{neiheiser2021kauri,
  author = {Neiheiser, Ray and Matos, Miguel and Rodrigues, Luís},
  title = {Kauri: Scalable BFT Consensus with Pipelined Tree-Based Dissemination and Aggregation},
  booktitle = {Proceedings of the 28th ACM Symposium on Operating Systems Principles (SOSP)},
  year = {2021}
}

@inproceedings{spiegelman2022bullshark,
  author = {Spiegelman, Alexander and Giridharan, Neil and Sonnino, Alberto and Kokoris-Kogias, Lefteris},
  title = {Bullshark: DAG BFT Protocols Made Practical},
  booktitle = {Proceedings of the 2022 ACM SIGSAC Conference on Computer and Communications Security (CCS)},
  year = {2022}
}

@inproceedings{danezis2022tusk,
  author = {Danezis, George and Kokoris-Kogias, Lefteris and Sonnino, Alberto and Spiegelman, Alexander},
  title = {Narwhal and Tusk: A DAG-based Mempool and Efficient BFT Consensus},
  booktitle = {Proceedings of the 2022 European Conference on Computer Systems (EuroSys)},
  year = {2022}
}

@inproceedings{antunes2024aleabft,
  author = {Antunes, Diogo S. and Oliveira, Afonso N. and Breda, André and Franco, Matheus G. and Moniz, Henrique and Rodrigues, Rodrigo},
  title = {Alea-BFT: Practical Asynchronous Byzantine Fault Tolerance},
  booktitle = {Proceedings of the 21st USENIX Symposium on Networked Systems Design and Implementation (NSDI)},
  year = {2024}
}

@inproceedings{giridharan2024autobahn,
  author = {Giridharan, Neil and Suri-Payer, Florian and Abraham, Ittai and Alvisi, Lorenzo and Crooks, Natacha},
  title = {Autobahn: Seamless High-Speed BFT},
  booktitle = {Proceedings of the 30th ACM Symposium on Operating Systems Principles (SOSP)},
  year = {2024}
}

@INPROCEEDINGS{duan22dyno,
  author={Duan, Sisi and Zhang, Haibin},
  booktitle={2022 IEEE Symposium on Security and Privacy (SP)}, 
  title={Foundations of Dynamic BFT}, 
  year={2022},
  volume={},
  number={},
  pages={1317-1334},
  doi={10.1109/SP46214.2022.9833787}}

@inproceedings{jehl2014arec,
  author = {Jehl, Leander and Meling, Hein},
  title = {Asynchronous Reconfiguration for Paxos State Machines},
  booktitle = {Proceedings of the 15th International Conference on Distributed Computing and Networking (ICDCN)},
  year = {2014},
  url = {https://www.ux.uis.no/~meling/papers/2014-asyncreconfig-icdcn.pdf}
}

@inproceedings{jehl2015smartmerge,
  author = {Jehl, Leander and Vitenberg, Roman and Meling, Hein},
  title = {SmartMerge: A New Approach to Reconfiguration for Atomic Storage},
  booktitle = {Proceedings of the 29th International Symposium on Distributed Computing (DISC)},
  year = {2015},
  url = {https://doi.org/10.1007/978-3-662-48653-5_11}
}

\pagebreak

\end{document}